\begin{document}

\title{{\Large Mechanism Design via Correlation Gap}}

\author{Qiqi Yan%
\thanks{This research was supported by a Stanford Graduate Fellowship. Part
of this research was done while the author was at Yahoo! Research,
Santa Clara.%
}\\
 qiqiyan@cs.stanford.edu\\
 Department of Computer Science\\
 Stanford University}

\maketitle

\begin{abstract}
For revenue and welfare maximization in single-dimensional Bayesian
settings, Chawla et al.~(STOC10) recently showed that sequential
posted-price mechanisms (SPMs), though simple in form, can perform
surprisingly well compared to the optimal mechanisms. In this paper,
we give a theoretical explanation of this fact, based on a connection
to the notion of correlation gap.

Loosely speaking, for auction environments with matroid constraints,
we can relate the performance of a mechanism to the expectation of
a monotone submodular function over a random set. This random set
corresponds to the winner set for the optimal mechanism, which is
highly correlated, and corresponds to certain demand set for SPMs,
which is independent. The notion of correlation gap of Agrawal et
al.\ (SODA10) quantifies how much we {}``lose'' in the expectation
of the function by ignoring correlation in the random set, and hence
bounds our loss in using certain SPM instead of the optimal mechanism.
Furthermore, the correlation gap of a monotone and submodular function
is known to be small, and it follows that certain SPM can approximate
the optimal mechanism by a good constant factor.

Exploiting this connection, we give tight analysis of a greedy-based
SPM of Chawla et al.\ for several environments. In particular, we
show that it gives an $e/(e-1)$-approximation for matroid environments,
gives asymptotically a $1/(1-1/\sqrt{2\pi k})$-approximation for
the important sub-case of $k$-unit auctions, and gives a $(p+1)$-approximation
for environments with $p$-independent set system constraints.
\end{abstract}

\section{Introduction}

In mechanism design, or even more broadly in algorithm design as well,
there is an inherent conflict between optimality and simplicity. Mechanisms
like Myerson's mechanism \cite{M81} or the VCG mechanism \cite{vic-61,cla-71,gro-73}
have optimal revenue or welfare guarantees, but often suffer from
having complicated formats or severe computational overhead. For example,
even in single-item auctions, the need for the agents to commit to
the auction process itself can be a significant burden \cite{Aus06,Hol08},
and in combinatorial auctions, determining the allocation and payments
of the VCG mechanism is a computationally hard problem \cite{NR-00}.
Therefore, not surprisingly, simple mechanisms are very often favored
in practice \cite{Hol08}. Consider sequential posted-price mechanisms,
in which the seller makes take-it-or-leave-it price offers to agents
one by one. Such mechanisms are easy to run for the sellers, leave
little room for agents' strategic behavior, and keep the information
elicitation from the agents at a minimum level. Of course, simplicity
comes at a cost, as such simple mechanisms are in general not optimal.
Therefore, it is an interesting question to quantify how much we are
paying for keeping it simple.

Following a recent trend \cite{HR09,DRY10,CHMS10}, we focus on quantifying
the performance (revenue or welfare) of simple mechanisms relative
to that of the optimal mechanism. In particular, we focus on revenue
and welfare maximization in single-dimensional Bayesian mechanism
design, and we are interested in comparing the performance of Sequential
Posted-price Mechanisms (SPMs) to that of the optimal mechanism, which
is Myerson's mechanism for revenue, and the VCG mechanism for welfare.
In a recent work of Chawla et al.~\cite{CHMS10}, it was shown for
several contexts that the performance of a SPM (which we call greedy-SPM)
approximates that of the optimal mechanism by a small constant factor,
where the factor is $2$ for matroid environments (which generalize
$k$-unit auctions, certain matching markets etc.), and $e/(e-1)$
for $k$-unit auctions. This is surprising, as SPMs can only offer
prices to agents in a very restricted way, while the optimal mechanism
can choose a price for each agent based on full information about
all other agents. What is the underlying reason for SPMs' good performance?
Our main goal of this paper is to give a theoretical explanation for
this curious fact, based on a connection to the notion of correlation
gap.

\textbf{Reducing Mechanism Design to Correlation Gap} The notion of
correlation gap was first formalized in Agrawal et al.~\cite{ADSY10}.
Let $f(S)$ be a function that maps a subset $S$ of a finite ground
set $N$ to a nonnegative real number. For $\mathcal{D}$ a distribution
over $2^{N}$ with marginal probabilities $q_{i}=Pr_{S\sim\mathcal{D}}[i\in S]$,
let $\mathcal{I}_{\mathcal{D}}$ be the independent distribution where
each $i\in N$ is included in the set with the same marginal probability
$q_{i}$, but independently. The correlation gap of $f$ is defined
as the supremum of $\frac{E_{S\sim\mathcal{D}}[f(S)]}{E_{S\sim\mathcal{I}(\mathcal{D})}[f(S)]}$
over all distribution $\mathcal{D}$, which in some sense bounds our
{}``loss'' in expected value of the function by ignoring correlation.

Loosely speaking, the approximation ratio of SPMs w.r.t.\ the optimal
mechanism is related to correlation gap in the following way. The
performance of a mechanism can often be related to the expectation
of certain function $f$ over a random set of agents. For the optimal
mechanism, this random set corresponds to the set of winners, while
for an SPM, this random set corresponds to the demand set, which is
the set of agents whose values beat the prices set for them in the
SPM. Notice that the winner set is highly-dependent, while the demand
set is independent. By setting prices for agents in an SPM carefully
such that these two random sets have the same marginal probabilities,
we can apply the correlation gap of $f$ to get a bound on the approximation
ratio of the SPM w.r.t.\ the optimal mechanism.

\textbf{Reduction for $k$-Unit Auctions} To illuminate the idea,
suppose we sell $k$ items to a set of $n$ agents $N=\{1,\dots,n\}$
with valuations drawn i.i.d.\ from a normal distribution $F$, and
our goal is to maximize expected revenue. Define set function $f$
as $f(S)=\min(|S|,k)$ for $S\subseteq N$. Let $q$ be the probability
that Myerson's optimal mechanism sells to a particular agent (which
is the same for every agent by symmetry). It can be shown that the
optimal way to sell to an agent with success probability $q$ in an
incentive compatible manner is to offer the deterministic price $p=F^{-1}(1-q)$.
Therefore if we pretend that an agent pays $p$ whenever she wins
in the optimal mechanism, the total calculated revenue is only higher.
In other words, the revenue of Myerson's mechanism is upper-bounded
by $E_{W}[f(W)]\cdot p$, where $W$ is the set of winners. On the
other hand, let an SPM make take-it-or-leave-it offers at price $p$
to every agent sequentially. Define demand set $D$ as the set of
agents whose values are at least $p$. Since at most $k$ agents can
be served, the revenue of the SPM is equal to $E_{D}[f(D)]\cdot p$.
Note that $W$ and $D$ have the same marginal probability $q$ for
every $i$, and $D$ follows an independent distribution. Therefore
if we can show that the correlation gap of $f$ is at most $\beta$,
then $E_{D}[f(D)]\geq(1/\beta)\cdot E_{W}[f(W)]$, and it follows
that the revenue of SPM is a $\beta$-approximation to that of Myerson's
mechanism.

\textbf{Submodularity} The set function $f$ that arises in our context
is the weighted rank function of the set system that encodes the feasibility
constraints of the environment. For settings where constraints are
modeled by matroids, the weighted rank functions are well-known to
be monotone and submodular. This fact enables us to invoke a deep
result from the literature on submodular functions \cite{Von07submod,ADSY10},
which says that the correlation gap of a monotone and submodular function
is at most $e/(e-1)$. It follows that for matriod environments, SPMs
can approximate the optimal mechanism by a factor of $e/(e-1)$. This
result would be otherwise difficult to achieve without making use
of our explicit connection to correlation gap and submodularity.

Recognizing submodularity is also helpful in other ways. In the analysis
for $k$-unit auctions, we exploit the cross-convexity of the multi-linear
extension of submodular functions to get a tight bound on the correlation
gap of the corresponding weighted rank function.

\textbf{Applying the Reduction} The reduction to correlation gap gives
us a structured way of analyzing greedy-SPM. It abstracts away all
the mechanism design aspects of the problem, such that we can focus
on the purely mathematical question of quantifying correlation gaps
of weighted rank functions. Based on this approach, we give tight
analysis for greedy-SPM in several contexts. In the following, approximation
guarantees are for an objective that can be revenue or welfare or
certain combination of both, and are for the version of greedy-SPM
that is tailored to the objective.

For matroid environments, as mentioned above, we show that greedy-SPM
is a $e/(e-1)$-approximation to the optimal mechanism, an improvement
over the previous $2$-approximation. For the important sub-case of
$k$-unit auctions, we show that greedy-SPM has approximation ratio
$1/(1-\frac{k^{k}}{e^{k}k!})$. ($\approx1/(1-\frac{1}{\sqrt{2\pi k}})$
by Stirling's formula) This implies that the performance of SPMs can
approach that of the optimal mechanism as the supply increases. In
particular here we do not assume that agents' valuation distributions
are identical. Towards settings more general than matroid environments,
we study $p$-independent environments, where feasibility constraints
are modeled by $p$-independent set systems, a generalization of intersection
of $p$ matroids. In such settings, we show that correlation gap is
at most $p+1$, which also translates into a $(p+1)$-approximation
for greedy-SPM. This generalizes the result on intersection of $p$
matroids in \cite{CHMS10}.

\subsection{Related Work}

For the maximization of revenue and welfare, Myerson's mechanism \cite{M81,BR-89}
and the VCG mechanism \cite{vic-61,cla-71,gro-73} are optimal, respectively.
Recent work in the CS literature has focused on designing simple mechanisms
that are approximately-optimal, while being more detail-free or robust.
Myerson's mechanism in general involves the calculation of (ironed)
virtual valuations using full distribution information. Hartline and
Roughgarden \cite{HR09} showed that a simpler mechanism, namely the
VCG mechanism with monopoly reserves is approximately optimal by a
constant factor for many natural settings, and Dhangwatnotai et al.~\cite{DRY10}
further removed the need of knowing monopoly reserves in advance via
a sampling-based approach. In another direction, Sundararajan and
Yan \cite{SY10} studied mechanisms that are approximately optimal
for utility-maximizing risk-averse sellers, even without prior knowledge
about their concave utility functions.

Sequential posted-price mechanisms have also been a recent focus of
study due to their simplicity and various appealing properties. Blumrosen
and Holenstein \cite{BH08} first compared SPMs to Myerson's mechanism
for single-item auctions by an asymptotic analysis. Chawla et al.~\cite{CHMS10}
studied SPMs in various auction contexts, proving that SPMs perform
very well compared to Myerson's mechanism, which motivated our work.
They also used SPMs as a building block to construct approximately-optimal
mechanisms in multi-dimensional settings. Independent of our work,
Chakraborty et al.~\cite{CEGMM10} proved almost the same approximation
guarantee for $k$-unit auctions. They also studied SPMs that adaptively
choose prices and the ordering of agents. Babaioff et al.~\cite{BBDS11}
studied adaptive SPMs in settings where agents' valuations are drawn
i.i.d.\ from an unknown distribution. In other aspects, Sundararajan
and Yan \cite{SY10} studied the performance of SPMs when the sellers
are risk-averse, and aim to maximize expected utility.

There is a vast literature on the study of submodular functions (see
references in \cite{Von07submod}). The correlation gap of monotone
submodular functions was first bounded in \cite{CCPV07}, and it is
also tightly related to the submodular welfare maximization problem
\cite{Von08}. In the context of auctions, Dughmi et al.~\cite{DRS09}
showed that in matroids environments, the revenue of Myerson's mechanism
is submodular in the set of agents that we actually run the mechanism
over.

\section{\label{s:setting}Preliminaries}

\textbf{Auction\ Environments} In our setting, the seller sells services
(or goods) to a set of $n$ unit-demand agents $N=\{1,\dots,n\}$.
Each agent $i$ has a private valuation $v_{i}$ for winning the service,
and $0$ otherwise, where each $v_{i}$ is drawn independently from
a known distribution $F_{i}$. For simplicity we assume that every
distribution is over a finite support $[0,L]$ for some large $L$,
and has a positive smooth density function. It is only feasible for
the seller to service certain subsets of the agents simultaneously,
and we let $\emptyset\in\mathcal{I}\subseteq2^{N}$ represent all
the feasible subsets. We assume that the environment is always downward-closed,
in the sense that the subset of a feasible set is also feasible. Auction
environments are classified by the set systems $(N,\mathcal{I})$.
In particular we study $k$-unit auctions, where a set $S$ is in
$\mathcal{I}$ if and only if $|S|\leq k$, matroid environments,
where $(N,\mathcal{I})$ forms a matroid, and $p$-independent environments,
where $(N,\mathcal{I})$ form a $p$-independent set system. We will
define the latter two environments later.

\textbf{Mechanisms} A (deterministic) mechanism uses an allocation
rule $\mathbf{x}:[0,\infty)^{n}\to\{0,1\}^{n}$ to choose the (characteristic
vector of) winning set of agents based on the reported valuations
$\mathbf{v}\in[0,\infty)^{n}$ of the agents, and uses a payment rule
$\mathbf{p}:[0,\infty)^{n}\to[0,\infty)^{n}$ to charge payments from
the agents. A randomized mechanism is a distribution over deterministic
mechanisms. For ease of presentation, we study mechanisms that are
incentive compatible (a.k.a., truthful) and individual rational, both
in the\emph{ ex post} sense, although our results still hold if we
allow mechanisms to be Bayesian incentive compatible. An equivalent
way of defining \emph{ex post} incentive constraints is that for each
agent $i$, if we fix the valuations \textbf{$\mathbf{v}_{-i}$ }of
the other agents, agent $i$ faces a take-it-or-leave-it offer at
a price $p_{i}(\mathbf{v}_{-i})$ that is independent of agent $i$'s
own value $v_{i}$.

Given an ordering of agents and a price $p_{i}$ for each agent $i$,
a Sequential Posted-price Mechanism (SPM) first initializes the allocated
set $A$ to be $\emptyset$, and for all agents $i$ in the given
order, do the following: if serving $i$ is feasible, i.e., $A+i\in\mathcal{I}$,
offer to serve agent $i$ at the pre-determined price $p_{i}$, and
add $i$ to $A$ if agent $i$ accepts. A randomized SPM is then a
distribution over deterministic SPMs.

\textbf{Weighted Rank Functions}For a set system $(N,\mathcal{I})$
with nonnegative weights $(w_{i})_{i\in N}$ on the elements, we define
the weighted rank function $w^{*}(S)$ as the maximum of $\sum_{i\in T}w_{i}$
over all $T\subseteq S$ with $T\in\mathcal{I}$. The (unweighted)
rank functions are defined with weights set to 1.

\textbf{Greedy} Given a set system $(N,\mathcal{I})$ with nonnegative
weights $(w_{i})_{i\in N}$, and a subset $S$ of $N$, the greedy
algorithm starts with an empty solution set $A$, and for each agent
$i$ in $S$ in decreasing order of $w_{i}$, adds $i$ into the solution
set $A$ whenever $A\cup\{i\}$ is in $\mathcal{I}$. Finally it outputs
$A$. We let $greedy(S)$ denote the final output of greedy algorithm. 

\textbf{Matroids }A set system $(N,\mathcal{I})$ is a matroid system
(see e.g.\ \cite{Ox-92}) if (1) $S\in\mathcal{I}$ whenever $S\subseteq T\in\mathcal{I}$,
and (2) if $S,T\in\mathcal{I}$ and $|S|>|T|$, then for some $e\in S\backslash T$,
$T\cup\{e\}\in\mathcal{I}$. We will make use of the following two
well-known properties about matroids: (1) If we run greedy on a subset
$S$ of the matroid, then the weight of its output set equals to the
weighted rank of the set $S$, i.e., $\sum_{i\in greedy(S)}w_{i}=w^{*}(S)$.
(2) The weighted rank function of a matroid is monotone and submodular.

\textbf{Correlation Gap and Submodularity} Given a set function $f:2^{N}\to[0,\infty)$
over a finite set $N$, let $\mathcal{D}$ be a distribution over
$2^{N}$ with marginal probabilities $\mathbf{q}=(q_{i})_{i\in N}$.
Let $S\sim\mathcal{I}(\mathcal{D})$ denote that each $i\in N$ is
included in $S$ with probability $q_{i}$ independently. Then the
correlation gap \cite{ADSY10} \footnote{We differ from \cite{ADSY10} in that the correlation gap was defined there to be at most 1.}
of $f$ is $\sup_{\mathcal{D}}\frac{E_{S\sim\mathcal{D}}[f(S)]}{E_{S\sim\mathcal{I}(\mathcal{D})}[f(S)]}$.
(we let $\frac{0}{0}=1$ here)

A set function $f:2^{N}\to[0,\infty)$ is monotone if $f(S)\leq f(T)$
whenever $S\subseteq T$, and is submodular if $f(S)+f(T)\geq f(S\cup T)+f(S\cap T)$
for all $S,T$.

\begin{theorem} \label{t:cg_submod}\cite{CCPV07,ADSY10} The correlation
gap of a monotone submodular function is at most $e/(e-1)$. 

\end{theorem}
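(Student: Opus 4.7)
The plan is to sandwich the correlation-gap ratio between two standard extensions of $f$ from $\{0,1\}^{N}$ to $[0,1]^{N}$: the \emph{multilinear extension} $F(\mathbf{q}) = E_{S\sim\mathcal{I}(\mathbf{q})}[f(S)]$, and the \emph{concave closure} $f^{+}(\mathbf{q})$, defined as the supremum of $\sum_{S}\alpha_{S}f(S)$ over all distributions $(\alpha_{S})_{S}$ on $2^{N}$ with marginals equal to $\mathbf{q}$. For any joint distribution $\mathcal{D}$ with marginals $\mathbf{q}$, the numerator $E_{S\sim\mathcal{D}}[f(S)]$ is a feasible objective value in the LP defining $f^{+}(\mathbf{q})$, hence at most $f^{+}(\mathbf{q})$, while the denominator equals $F(\mathbf{q})$ by definition. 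It therefore suffices to prove the pointwise inequality $F(\mathbf{q})\geq(1-1/e)\,f^{+}(\mathbf{q})$ for every $\mathbf{q}\in[0,1]^{N}$.

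For the main inequality, my approach is to take an optimal decomposition $\mathbf{q}=\sum_{j}\alpha_{j}\mathbf{1}_{S_{j}}$ achieving $f^{+}(\mathbf{q})=\sum_{j}\alpha_{j}f(S_{j})$ and argue that independent sampling with marginals $\mathbf{q}$ recovers at least a $(1-1/e)$ fraction of this value. The key lever is a differential inequality provided by monotone submodularity: along the segment $\mathbf{y}(t)=t\mathbf{q}$ from $\mathbf{0}$ to $\mathbf{q}$, one can show $\frac{d}{dt}F(\mathbf{y}(t))\geq f^{+}(\mathbf{q})-F(\mathbf{y}(t))$, which Gronwall-integrates to $F(\mathbf{q})\geq(1-e^{-1})f^{+}(\mathbf{q})$ at $t=1$. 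This is essentially the continuous greedy analysis of Calinescu--Chekuri--P\'al--Vondr\'ak, restated in correlation-gap form by Agrawal et al.

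The main obstacle is establishing this differential inequality cleanly. A per-atom bound would be too weak: applying submodularity to each $S_{j}$ in isolation only gives a factor depending on the corresponding $\alpha_{j}$, and such per-atom factors do not aggregate to a uniform $(1-1/e)$ constant. Instead, submodularity must be invoked globally along the path, using the identity that each partial derivative $\partial F/\partial q_{i}$ equals the expected marginal contribution of element $i$ to an independently sampled set; summing these marginal contributions against the decomposition $(\alpha_{j},S_{j})$ and applying submodularity atom-by-atom yields the ODE. Once this lemma is in place, the ODE integration combined with the sandwich argument of the first paragraph proves the theorem.
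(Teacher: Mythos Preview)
The paper does not prove this theorem; it is quoted as a known result from \cite{CCPV07,ADSY10} and used as a black box. Your proposal correctly reconstructs the standard argument from those references---sandwiching the ratio between the multilinear extension $F$ and the concave closure $f^{+}$, and then deriving $F(\mathbf{q})\geq(1-1/e)f^{+}(\mathbf{q})$ via the continuous-greedy differential inequality $\frac{d}{dt}F(t\mathbf{q})\geq f^{+}(\mathbf{q})-F(t\mathbf{q})$---so there is nothing to compare against and your sketch is sound.
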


\section{\label{sec:reduction}Posted-Price\emph{ vs }Optimal: A Reduction
to Correlation Gap}

We will focus on comparing SPMs with the optimal mechanism in the
context of revenue maximization. Almost identical claims can be made
for welfare and certain other objectives, which we discuss in Section
\ref{sec:welfare}.

\subsection{A Single Bidder Optimization Problem}

Before we embark on studying mechanisms that involve multiple bidders,
it is crucial to first understand the following optimization problem
that involves only one bidder.

\begin{Problem}Given an agent with valuation distribution $F$, and
a target selling probability $0<q<1$, what price distribution $\mathcal{D}$
maximizes our expected revenue, i.e., $E_{p\sim\mathcal{D}}[p\cdot(1-F(p))]$,
subject to the constraint that the selling probability is exactly
$q$, i.e., $E_{p\sim\mathcal{D}}[1-F(p)]=q$. 

\end{Problem}

To study this problem, first suppose that we can only offer a deterministic
price. Then for any selling probability $q$, our only choice is to
offer the deterministic price $F^{-1}(1-q)$, and the revenue we get
as a function of $q$ is $R_{F}(q)=q\cdot F^{-1}(1-q)$.

Now suppose instead we are allowed to offer a random price, then we
can do possibly better. To be specific, we can randomize between two
prices $\underline{p}$ and $\bar{p}$ with selling probabilities
$\underline{q}=1-F(\underline{p})$ and $\bar{q}=1-F(\bar{p})$ satisfying
$\underline{q}\leq q\leq\overline{q}$, and in particular we draw
$\underline{p}$ with probability $\frac{\overline{q}-q}{\overline{q}-\underline{q}}$
and draw $\overline{p}$ with probability $\frac{q-\underline{q}}{\overline{q}-\underline{q}}$
such that the selling probability is exactly equal to $q$. Then our
revenue is equal to $\frac{\overline{q}-q}{\overline{q}-\underline{q}}\cdot R_{F}(\underline{q})+\frac{q-\underline{q}}{\overline{q}-\underline{q}}\cdot R_{F}(\overline{q})$,
which is possibly better than $R_{F}(q)$. Let $\overline{R}_{F}(q)$
be the maximum revenue one can get by randomizing between two prices
this way. One can show that $\overline{R}_{F}$ equals to the concave
closure of $R_{F}$, i.e., the minimum concave function that upper-bounds
$R_{F}$. Moreover, the optimal distribution is in fact just the two-price
distribution that gives $\overline{R}_{F}(q)$.

\begin{figure}
\includegraphics[scale=1.75]{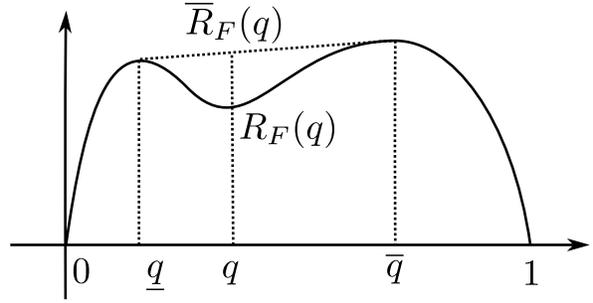}\caption{Revenue Curve and ``Ironed'' Revenue Curve}

\end{figure}

In the well-known special case that $F$ is regular, i.e., $R_{F}(q)$
is concave in $q$\footnote{This is equivalent to the definition that the virtual valuation function is monotone.},
the two-price distribution degenerates to a single deterministic price
$F^{-1}(1-q)$, and $\overline{R}_{F}(q)=R_{F}(q)$ in this case.

For the purpose of the rest of the paper, the following lemma summarizes
this discussion.

\begin{lemma}\cite{M81,BR-89} \label{lem:ironing}For all valuation
distribution $F$ and probability $q$, the price distribution $\mathcal{D}$
that maximizes $E_{p\sim\mathcal{D}}[p\cdot(1-F(p))]$ subject to
the constraint that $E_{p\sim\mathcal{D}}[1-F(p)]=q$ is a two-price
distribution, where this distribution as well as the revenue $\overline{R}_{F}(q)$
it gives us can be determined from $F$. Moreover, $\overline{R}_{F}(q)$
is a concave function.

\end{lemma}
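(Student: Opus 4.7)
The plan is to convert the problem into an optimization over distributions of \emph{selling probabilities} rather than prices, reducing the lemma to the standard fact that maximizing $E[R_F(Q)]$ over distributions with a prescribed mean equals the concave envelope of $R_F$ at that mean.

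First I would reparametrize. Because $F$ has a positive smooth density on $[0,L]$, the map $p\mapsto 1-F(p)$ is a continuous bijection between $[0,L]$ and $[0,1]$, with continuous inverse $q\mapsto F^{-1}(1-q)$. A price distribution $\mathcal{D}$ therefore corresponds bijectively to a distribution $\mathcal{Q}$ on selling probabilities $Q\in[0,1]$; the constraint becomes $E_{Q\sim\mathcal{Q}}[Q]=q$ and the objective becomes $E_{Q\sim\mathcal{Q}}[R_F(Q)]$, where $R_F(t)=t\cdot F^{-1}(1-t)$ is continuous on $[0,1]$. So it suffices to maximize $E[R_F(Q)]$ over distributions $\mathcal{Q}$ on $[0,1]$ subject to $E[Q]=q$.

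Next I would invoke the classical concave-envelope characterization. Let $\overline{R}_F$ denote the pointwise smallest concave function dominating $R_F$ on $[0,1]$; by construction it is concave. For the upper bound, Jensen applied to $\overline{R}_F$ gives $E[R_F(Q)]\le E[\overline{R}_F(Q)]\le \overline{R}_F(E[Q])=\overline{R}_F(q)$. For the matching lower bound and for exhibiting a two-price optimum, I would use Carath\'eodory's theorem in $\mathbb{R}^2$: the point $(q,\overline{R}_F(q))$ lies on the upper boundary of the closed convex hull of the hypograph of $R_F$, hence on a boundary segment joining two graph points $(\underline{q},R_F(\underline{q}))$ and $(\overline{q},R_F(\overline{q}))$ with $\underline{q}\le q\le \overline{q}$. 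Setting $\underline{p}=F^{-1}(1-\underline{q})$ and $\overline{p}=F^{-1}(1-\overline{q})$ and mixing them with weights $\lambda=(\overline{q}-q)/(\overline{q}-\underline{q})$ and $1-\lambda$ produces a two-price distribution whose selling probability is exactly $q$ and whose expected revenue is exactly $\overline{R}_F(q)$, matching the upper bound.

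The only subtle step is confirming that the supremum is actually attained by a two-point mixture rather than merely approached in the limit, but this is clean: compactness of $[0,1]$ and continuity of $R_F$ (both guaranteed by the positive smooth density of $F$) make the hypograph's convex hull closed, so the extremal chord exists. The chord endpoints $\underline{q},\overline{q}$, the mixing weight $\lambda$, and hence the full optimal two-price distribution are all determined deterministically from $F$ via the concave-envelope (Myerson ironing) construction, which also gives concavity of $\overline{R}_F$ as a byproduct.
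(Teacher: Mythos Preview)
Your proposal is correct and follows essentially the same route as the paper: reparametrize by selling probability, identify the optimum with the concave closure $\overline{R}_F$ of the deterministic revenue curve $R_F(q)=q\cdot F^{-1}(1-q)$, and realize $\overline{R}_F(q)$ by a two-point mixture on the supporting chord. The paper treats this lemma as folklore from \cite{M81,BR-89} and only sketches the two-price construction in the discussion preceding the statement; your write-up supplies the Jensen upper bound and the Carath\'eodory/compactness justification that the paper omits, but there is no substantive difference in approach.
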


For notational convenience, we will use $\overline{R}_{i}$ to denote
the $\overline{R}_{F}$ function for agent $i$.

\subsection{\label{sec:reduction_theorem}Reduction Theorem: the Revenue Case}

\begin{Definition} The greedy-SPM of Chawla et al.\ \cite{CHMS10}
(with slight changes) does the following:
\begin{enumerate}
\item For each agent $i$, calculate $q_{i}$, the winning probability of
agent $i$ in Myerson's mechanism. Remove agent $i$ if $q_{i}=0$.
\item For each agent $i$, draw a random price $p_{i}$ from the optimal
price distribution w.r.t.\ distribution $F_{i}$ and selling probability
$q_{i}$ according to the Ironing Lemma.
\item Let $A=\emptyset$. For all agent $i$ in decreasing order of effective
prices $\hat{p}_{i}$ defined as $\hat{p}_{i}=\overline{R}_{i}(q_{i})/q_{i}$,
if serving agent $i$ is feasible, i.e., $A+i\in\mathcal{I}$, offer
price $p_{i}$ to agent $i$, and add $i$ into $A$ if agent $i$
accepts. 
\end{enumerate}
\end{Definition} 

\begin{theorem}[Reduction Theorem for Matroids] 

\label{thm:reduction}For matroid environments, if the correlation
gap of the weighted rank function is at most $\beta$ for no matter
what non-negative weights, then the expected revenue greedy-SPM is
a $\beta$-approximation to that of Myerson's optimal mechanism.

\end{theorem}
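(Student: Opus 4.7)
The plan is to sandwich both revenues between expectations of the \emph{same} weighted rank function $w^{*}$ (with nonnegative weights $\hat{p}_{i}=\overline{R}_{i}(q_{i})/q_{i}$), evaluated on random sets sharing the marginal profile $(q_{i})_{i\in N}$: the correlated winner set $W$ of Myerson's mechanism for the upper bound, and the independent demand set $D=\{i:v_{i}\geq p_{i}\}$ of greedy-SPM for the lower bound. For the upper bound, I apply the Ironing Lemma per agent: in any ex-post IC mechanism, agent $i$ effectively faces a (possibly random) take-it-or-leave-it price determined by $\mathbf{v}_{-i}$, and among all such price distributions with selling probability exactly $q_{i}$, the maximum expected revenue is $\overline{R}_{i}(q_{i})=q_{i}\hat{p}_{i}$. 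Summing, $\mathrm{Rev}(\mathrm{Myerson})\leq\sum_{i}\hat{p}_{i}\Pr[i\in W]=E\bigl[\sum_{i\in W}\hat{p}_{i}\bigr]$, and since $W\in\mathcal{I}$ with probability one the inner sum equals $w^{*}(W)$, giving $\mathrm{Rev}(\mathrm{Myerson})\leq E[w^{*}(W)]$.

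For the lower bound, by construction of the two-price distribution each agent independently satisfies $\Pr[i\in D]=q_{i}$, so $D$ is drawn from the independent product distribution with these marginals. Since greedy-SPM scans agents in decreasing $\hat{p}_{i}$ order and accepts exactly those in $D$ that still fit, the served set coincides with the output of the matroid greedy algorithm run on $D$ with weights $\hat{p}_{i}$; by the standard greedy identity for matroids, its total $\hat{p}_{i}$-weight equals $w^{*}(D)$ pointwise in the randomness. To convert this structural identity into a revenue identity, I would write the expected revenue collected from agent $i$ as $E[\,p_{i}\mathbf{1}[v_{i}\geq p_{i}]\mathbf{1}[R_{i}]\,]$, where $R_{i}$ is the event that $A+i\in\mathcal{I}$ at $i$'s turn. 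The key observation is that $R_{i}$ depends only on other agents' draws, so it is independent of $(p_{i},v_{i})$, and the expectation factors as $\Pr[R_{i}]\cdot\overline{R}_{i}(q_{i})=\hat{p}_{i}\cdot\Pr[i\text{ is served}]$. Summing yields $\mathrm{Rev}(\mathrm{SPM})=E\bigl[\sum_{i\text{ served}}\hat{p}_{i}\bigr]=E[w^{*}(D)]$.

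Finally, applying the hypothesized correlation-gap bound of $w^{*}$ to the distribution of $W$ versus the independent product distribution on the same marginals gives $E[w^{*}(W)]\leq\beta\cdot E[w^{*}(D)]$, and chaining with the previous two bounds concludes $\mathrm{Rev}(\mathrm{Myerson})\leq\beta\cdot\mathrm{Rev}(\mathrm{SPM})$. I expect the main obstacle to be the independence argument in the lower bound: one must carefully verify that the combinatorial feasibility event $R_{i}$ is decoupled from agent $i$'s own private draws $(p_{i},v_{i})$, which is precisely what allows the per-agent revenue to collapse cleanly to $\hat{p}_{i}\cdot\Pr[i\text{ served}]$ and matroid-greedy optimality to substitute in $w^{*}(D)$. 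A related but important consistency point is that the same $\hat{p}_{i}$ ordering is used both to define the weights and to run greedy, which is exactly why greedy-SPM is set up as in the definition.
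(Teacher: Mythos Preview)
Your proposal is correct and follows essentially the same approach as the paper: upper-bound Myerson by $E_W[\hat{p}^*(W)]$ via the Ironing Lemma and feasibility of $W$, identify greedy-SPM's revenue with $E_D[\hat{p}^*(D)]$ via matroid-greedy optimality and the observation that actual and effective per-agent revenues coincide, then apply the correlation-gap hypothesis. Your explicit factorization through the feasibility event $R_i$ (independent of $(p_i,v_i)$) is exactly the content of the paper's terser statement that ``whenever the random price $p_i$ is offered to an agent, we get expected revenue $\overline{R}_i(q_i)$''.
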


\begin{proof} 

In the following two claims, we relate the expected revenue of both
Myerson's mechanism and greedy-SPM to the weighted rank function with
effective prices $\hat{p}_{i}$ as weights, which we denote as $\hat{p}^{*}$.

\begin{claim} 

Let $W$ be the (random) set of winning agents in Myerson's mechanism.
The expected revenue of Myerson's mechanism is upper-bounded by $E_{W}[\hat{p}^{*}(W)]$.

\end{claim}

\begin{proof} Let $q_{i}=Pr_{W}[i\in W]$ be the probability that
agent $i$ wins in Myerson's mechanism. By Lemma \ref{lem:ironing},
the optimal way to sell to agent $i$ with probability $q_{i}$ gives
expected revenue $\overline{R}_{i}(q_{i})$. By linearity of expectation,
the expected revenue of Myerson's mechanism is upper-bounded by $\sum_{i\in N}\overline{R}_{i}(q_{i})$.
To relate this to the effective prices, suppose in Myerson's mechanism,
we get effective payment $\hat{p}_{i}$ whenever agent $i$ wins.
Then the total effective revenue is $E_{W}[\sum_{i\in W}\hat{p}_{i}]$.
Also, each agent $i$ wins with probability $q_{i}$ in Myerson's
mechanism, contributing $q_{i}\hat{p}_{i}=\overline{R}_{i}(q_{i})$
to total effective revenue, and hence $\sum_{i\in N}\overline{R}_{i}(q_{i})$
equals effective revenue $E_{W}[\sum_{i\in W}\hat{p}_{i}]$. Further,
since $W$ is a feasible set, we can rewrite $E_{W}[\sum_{i\in W}\hat{p}_{i}]$
as $E_{W}[\hat{p}^{*}(W)]$, and our claim follows.

\end{proof} 

\begin{claim} Let demand set $D$ be the (random) set of agents whose
values beat the prices set for them respectively. The expected revenue
of greedy-SPM equals to $E_{D}[\hat{p}^{*}(D)]$.

\end{claim}

\begin{proof} Because valuation distributions of the agents are independent,
each agent $i$ is in the demand set $D$ with probability $q_{i}$
independently. Observe that ignoring agents not in the demand set,
who do not win anyway, greedy-SPM effectively runs the greedy algorithm
on the demand set $D$ w.r.t.\ weights $\hat{p}_{i}$ subject to
feasibility constraints. The expected effective revenue of greedy-SPM
is hence equal to $E_{D}[\sum_{i\in greedy(D)}\hat{p}_{i}]$, which
is equal to $E_{D}[\hat{p}^{*}(D)]$ by the optimality of greedy for
matroid. Note that whenever the random price $p_{i}$ is offered to
an agent, we get expected revenue $\overline{R}_{i}(q_{i})$, while
the expected effective revenue is $q_{i}\hat{p}_{i}$, also equal
to $\overline{R}_{i}(q_{i})$. Therefore the expected revenue of greedy-SPM
equals to the expected effective revenue, which is $E_{D}[\hat{p}^{*}(D)]$.

\end{proof}

By our assumption that the correlation gap of the weighted rank function
is at most $\beta$, we have $E_{D}[\hat{p}^{*}(D)]\geq\frac{1}{\beta}\cdot E_{W}[\hat{p}^{*}(W)]$,
and our theorem follows by chaining this inequality with the above
two claims.

\end{proof}

For settings beyond matroids, we need the following technical condition
for the reduction to work, which is a stronger condition than merely
a bound on correlation gap.

\begin{Definition}

We say that the greedy algorithm verifies a correlation gap of $\beta$
for the weighted rank function of a set system, if for all nonnegative
weights $(w_{i})_{i\in N}$, and distribution $\mathcal{D}$ over
$2^{N}$, we have $E_{S\sim\mathcal{I}(\mathcal{D})}[\sum_{i\in greedy(S)}w_{i}]\geq\frac{1}{\beta}E_{S\sim\mathcal{D}}[w^{*}(S)]$. 

\end{Definition} 

\begin{theorem}[Reduction Theorem in General] 

For any downward-closed environment, if the greedy algorithm verifies
a correlation gap of $\beta$ for the weighted rank function for arbitrary
non-negative weights, then the expected revenue of greedy-SPM is a
$\beta$-approximation to that of Myerson's optimal mechanism. 

\end{theorem}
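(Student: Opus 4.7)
The plan is to mirror the proof of Theorem \ref{thm:reduction}, keeping the upper bound on Myerson's revenue unchanged while rerouting the lower bound on greedy-SPM's revenue through the stronger ``verifies'' hypothesis rather than through the optimality of greedy for matroids. As before, let $q_{i}=Pr_{W}[i\in W]$ be the winning probability of agent $i$ under Myerson's mechanism, let $\hat{p}_{i}=\overline{R}_{i}(q_{i})/q_{i}$ be the effective prices, and let $D$ be the demand set of greedy-SPM, so that each agent $i$ belongs to $D$ independently with probability $q_{i}$.

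First I would re-establish that Myerson's expected revenue is at most $E_{W}[\hat{p}^{*}(W)]$. The argument from Theorem \ref{thm:reduction} carries over verbatim: the Ironing Lemma bounds Myerson's revenue by $\sum_{i}\overline{R}_{i}(q_{i})=E_{W}[\sum_{i\in W}\hat{p}_{i}]$, and because $W\in\mathcal{I}$ and the weights $\hat{p}_{i}$ are non-negative, downward-closedness implies that $W$ itself is the maximum-weight feasible subset of $W$, so $\sum_{i\in W}\hat{p}_{i}=\hat{p}^{*}(W)$. Crucially, this step uses only downward-closedness, not the matroid axioms.

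Next I would show that the expected revenue of greedy-SPM equals $E_{D}[\sum_{i\in greedy(D)}\hat{p}_{i}]$. Agents outside $D$ reject their offers and contribute nothing; among agents in $D$, greedy-SPM processes them in decreasing order of $\hat{p}_{i}$ and accepts each whenever feasible, which is exactly the greedy algorithm applied to $D$ with weights $\hat{p}_{i}$. As in the matroid case, every offered agent $i$ contributes expected revenue $\overline{R}_{i}(q_{i})=q_{i}\hat{p}_{i}$, matching the expected effective revenue, which gives the claimed identity. This is precisely where the proof must diverge from Theorem \ref{thm:reduction}: in general set systems greedy need not attain the weighted rank $\hat{p}^{*}(D)$, so we cannot simply invoke a correlation gap bound on $\hat{p}^{*}$ and must instead work directly with $\sum_{i\in greedy(D)}\hat{p}_{i}$.

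Finally I would apply the hypothesis. The marginals of $W$ are $(q_{i})$, so $D$ is distributed as $\mathcal{I}(\mathcal{D})$ where $\mathcal{D}$ is the law of $W$. The assumption that greedy verifies a correlation gap of $\beta$, applied with the weights $\hat{p}_{i}$, yields
$$E_{D}\Bigl[\sum_{i\in greedy(D)}\hat{p}_{i}\Bigr]\geq\frac{1}{\beta}\,E_{W}[\hat{p}^{*}(W)],$$
and chaining this with the two claims above gives the stated $\beta$-approximation. The main obstacle is conceptual rather than technical: one must recognize that the matroid proof implicitly uses the identity $\sum_{i\in greedy(S)}\hat{p}_{i}=\hat{p}^{*}(S)$, whose failure in general set systems is precisely what the strengthened ``verifies'' condition is engineered to compensate for, so the theorem is essentially obtained by isolating the right abstraction of the matroid step.
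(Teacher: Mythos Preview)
Your proposal is correct and follows exactly the paper's approach: upper-bound Myerson's revenue by $E_{W}[\hat{p}^{*}(W)]$, express greedy-SPM's revenue as $E_{D}[\sum_{i\in greedy(D)}\hat{p}_{i}]$, and chain them via the ``verifies'' hypothesis. The paper's own proof is a three-sentence sketch saying precisely this, so your write-up is essentially a faithful expansion of it, including the correct observation that the step $\sum_{i\in greedy(S)}\hat{p}_{i}=\hat{p}^{*}(S)$ is the only place the matroid argument fails and is exactly what the strengthened hypothesis replaces.
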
 

\begin{proof}

Similarly, we upper-bound the revenue of Myerson by $E_{W}[\hat{p}^{*}(W)]$,
and express the revenue of greedy-SPM as $E_{D}[\sum_{i\in greedy(D)}\hat{p}_{i}]$.
The theorem follows by applying the assumption that greedy verifies
a correlation gap of $\beta$.

\end{proof}

\begin{Remark}

One crucial property about the greedy algorithm is that although we
are running greedy on the all agents, but for \emph{no matter what}
demand set it turns out to be, greedy is also optimizing or approximately
optimizing for this demand set. Most other approximation algorithms
do not have this property.

\end{Remark}

\subsection{\label{sec:welfare}Extension to Welfare and Other Objectives}

We specify an objective by defining functions of the form $g_{i}(v,p)$
for agents. If agent $i$ has true value $v$ and is offered a price
$p$ with $v\geq p$, then agent $i$ wins, and we gain objective
value $g_{i}(v,p)$. Our goal is then to maximize the total objective
value we collect from the agents. For maximizing welfare, revenue,
and surplus, we set $g_{i}(v,p)=v$, $g_{i}(v,p)=p$, and $g_{i}(v,p)=v-p$,
respectively. One can also define other objectives this way.

To adapt the definition of greedy-SPM and our reduction theorems,
we need the following changes. We define $G_{i}(q)$ as the maximum
expected objective value the seller can get by offering a deterministic
price such that the agent wins with probability $q$. We can then
derive an Ironing Lemma similarly, and also define $\overline{G}_{i}(q)$
as the concave closure of $G_{i}(q)$. Then we use effective gain
defined as $\overline{G}_{i}(q)/q$ to replace effective prices as
weights, and the rest of the proof goes the same way.

\subsection{\label{sec:computation}Efficiently Computable SPMs}

In greedy-SPM, we need to compute the winning probabilities of the
agents in Myerson's mechanism, which is potentially computationally
hard. This was addressed in Chawla et al.~by a sampling-based approach,
which estimates the winning probabilities by repeatedly running Myerson's
mechanism for sufficiently many times.

We note that the winning probabilities give a feasible solution to
the following convex program, whose optimal value gives an upper bound
on the revenue of Myerson's mechanism.\begin{eqnarray*}
 & \mbox{maximize}\sum_{i\in N}\overline{R}_{i}(q_{i})\\
 & \mbox{subject to}\\
 & \sum_{i\in S}q_{i}\leq rank(S) & \mbox{for all }S\\
 & q_{i}\geq0 & \mbox{for all }i\end{eqnarray*}

For many settings, we can solve this convex program efficiently to
get the optimal $q_{i}$ values, and use them in greedy-SPM instead.
It turns out that for settings we study in this paper, this variant
of greedy-SPM gives the same approximation guarantees. We leave the
details of this observation to the full version of the paper.

\section{\label{sec:app}Revenue and Welfare Guarantees of Greedy-SPM}

Based on the reduction theorem, we give tight analysis of greedy-SPM
of Chawla et al., and prove the guarantees in Theorem \ref{thm:main}.
By the reduction theorem, it suffices to study the the correlation
gaps of the weighted rank functions, and the greedy algorithm, which
we do separately in the following subsections. 

\begin{theorem} 

\label{thm:main}The expected revenue of greedy-SPM is a $\beta$-approximation
to that of Myerson's optimal mechanism, and the expected welfare of
(the welfare version of) greedy-SPM is a $\beta$-approximation to
that of the VCG mechanism, where:
\begin{itemize}
\item $\beta=e/(e-1)$ for matroid environments \\
{\hfill}(an improvement over 2) 
\item $\beta=1/(1-\frac{k^{k}}{e^{k}k!})\approx1/(1-\frac{1}{\sqrt{2\pi k}})$
for $k$-unit auctions \\
{\hfill}(an improvement over $e/(e-1)$) 
\item $\beta=p+1$ for $p$-independent environments \\
{\hfill}(a generalization from intersection of $p$ matroids) 
\end{itemize}
\end{theorem}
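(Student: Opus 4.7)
By the two Reduction Theorems of Section~\ref{sec:reduction_theorem}, together with the welfare extension in Section~\ref{sec:welfare}, each bound in Theorem~\ref{thm:main} reduces to bounding the correlation gap (or its greedy-verified version) of the weighted rank function $w^*$ of the underlying set system. The matroid case is immediate: $w^*$ is monotone and submodular for any matroid (Section~\ref{s:setting}), so Theorem~\ref{t:cg_submod} gives correlation gap at most $e/(e-1)$, and Theorem~\ref{thm:reduction} completes the proof.

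For $k$-unit auctions, where $w^*(S)$ is the sum of the top-$k$ weights in $S$, I plan a tight three-step analysis. First, decompose each weight as $w_i=\int_0^{w_i}dt$ to write $w^*$ as an integral of threshold rank functions, and invoke the cross-convexity of the multilinear extension of monotone submodular functions to reduce the worst case to the unit-weight symmetric rank $f(S)=\min(|S|,k)$. Second, for this symmetric submodular $f$ the concave closure at marginals $\mathbf{q}$ equals $\min(\sum_i q_i,k)$, while for fixed sum $\sum_i q_i$ the multilinear extension $E_{S\sim\mathcal{I}(\mathbf{q})}[\min(|S|,k)]$ is maximized by equalizing the $q_i$'s, reducing to $n$ identical marginals $q=\lambda/n$ with $\lambda\leq k$. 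Third, take $n\to\infty$: the demand-set size tends to $\mathrm{Poisson}(\lambda)$, the ratio is maximized at $\lambda=k$, and the identity $E[\min(Y,k)]=k\bigl(1-\Pr[Y=k]\bigr)$ for $Y\sim\mathrm{Poisson}(k)$, verified by a short PMF manipulation, yields the exact bound $1/(1-k^ke^{-k}/k!)$. A monotonicity-in-$n$ check (for fixed $nq$) ensures this limit is genuinely the supremum and not merely a limit point.

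For $p$-independent environments, $w^*$ need not be submodular, so I apply the general reduction theorem and show that greedy verifies a correlation gap of $p+1$; namely,
\[ E_{S\sim\mathcal{I}(\mathbf{q})}\Bigl[\sum_{i\in greedy(S)}w_i\Bigr] \;\geq\; \frac{1}{p+1}\sum_i q_i w_i, \]
which combined with the trivial upper bound $E_{\mathcal{D}}[w^*(S)]\leq\sum_i q_i w_i$ (since $w^*(S)\leq\sum_{i\in S}w_i$) closes the reduction. The key combinatorial step is a charging argument based on the defining $p$-independence property: when greedy processes agents in decreasing weight order, any agent $i\in S$ it fails to select must be blocked by at most $p$ already-selected winners, each of weight $\geq w_i$; independence of the demand set then lets these charges aggregate to the factor $p+1$, generalizing the $p$-matroid-intersection argument of Chawla et al.

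\emph{Main obstacle.} The $k$-unit case is by far the most delicate: the reduction to unit weights hinges on correctly invoking cross-convexity of the multilinear extension, the marginal-equalization step uses a Schur-type monotonicity of the worst-case ratio, and the passage to the Poisson limit must be accompanied by a monotonicity-in-$n$ check rather than just pointwise convergence. By contrast, the matroid bound is an immediate corollary of Theorem~\ref{t:cg_submod}, and the $p$-independent bound reduces to a clean combinatorial charging argument driven by the definition of a $p$-independent system.
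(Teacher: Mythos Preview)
Your matroid argument matches the paper's (Lemma~\ref{l:greedy_matroids}), and your $k$-unit outline is essentially the paper's Lemmas~\ref{lem:Phi}--\ref{lem:k-uniform-weighted}: the weighted rank decomposes as a conic combination of truncated unweighted ranks, cross-convexity forces the worst marginals to be equal, and the Poisson limit together with monotonicity in $n$ gives the constant. One slip: for fixed $\sum_i q_i$ the multilinear extension $E[\min(|S|,k)]$ is \emph{minimized}, not maximized, at equal marginals (this is precisely what cross-convexity yields, and what you need for the worst-case ratio); make sure the direction of your Schur-type step is right.

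The $p$-independent argument, however, has a genuine gap. The displayed inequality you propose,
\[
E_{S\sim\mathcal{I}(\mathbf{q})}\Bigl[\textstyle\sum_{i\in greedy(S)}w_i\Bigr]\;\geq\;\frac{1}{p+1}\sum_i q_i w_i,
\]
is false in general: take the $1$-uniform matroid ($p=1$) on $n$ elements with unit weights and $q_i=1$ for all $i$; the left side is $1$ while the right side is $n/2$. Your charging sketch (``each rejected $i\in S$ is blocked by at most $p$ earlier winners of larger weight'') would, read pointwise, give $\sum_{i\in S\setminus greedy(S)} w_i \le p\sum_{i\in greedy(S)} w_i$, which already fails in this example. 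The culprit is the ``trivial upper bound'' $E_{\mathcal D}[w^*(S)]\le\sum_i q_i w_i$: it throws away the only structural information about $\mathcal D$ that the charging argument needs.

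The paper's fix (Lemma~\ref{l:greedy_ind}) is to write $E_{\mathcal D}[w^*(S)]=\sum_i \tilde q_i w_i$, where $\tilde q_i$ is the probability that $i$ lies in the \emph{optimal feasible subset} of $S$ under $\mathcal D$. Because that subset is always feasible, the vector $(\tilde q_i)$ lies in the independence polytope, so for every prefix of greedy's run one gets $\sum_{i\in B_1\cup\dots\cup B_j}\tilde q_i\le p\cdot j$ from $p$-independence. Combining this with the observation that each agent contributes at least $\tilde q_i w_i$ in expectation to the hybrid quantity $E[\sum_{i\in A}w_i+\sum_{i\in U}\tilde q_i w_i]$ (she contributes $q_i w_i\ge\tilde q_i w_i$ if checked, and $\tilde q_i w_i$ if ignored) yields the $(p+1)$ bound. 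Your charging intuition is on the right track, but it must be run against $\tilde{\mathbf q}$, not against $\mathbf q$.
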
 

\begin{Remark} 

For matroid environments, as noticed in \cite{CHMS10}, if we run
the VCG mechanism, and set reserves to be the same as the prices used
in greedy-SPM, the revenue we get is as good as that of greedy-SPM,
for any particular valuation profile. It follows that the VCG mechanism
with such reserve prices has the same approximation guarantee for
revenue.

\end{Remark}

\subsection{\label{sec:matroid}Matroid Environments}

Matroid environments are important because many auction constraints
can be modeled using matroids, and matroids have various nice properties.
To give a few examples of matroids, $k$-uniform matroids encode the
constraints of $k$-unit auctions, where $S$ is in $\mathcal{I}$
if and only if $|S|\leq k$, graphical matroids enforce the feasible
sets to be the edge sets of acyclic subgraphs of a given graph, and
transversal matroids can model certain matching markets, and etc.

By the reduction theorem, to establish an $e/(e-1)$-approximation
of greedy-SPM in matroid environments, it suffices to prove the following
lemma. \begin{lemma} 

\label{l:greedy_matroids}The correlation gap of the weighted rank
function of a matroid is at most $e/(e-1)$.

\end{lemma}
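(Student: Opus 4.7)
The plan is to derive the lemma as an essentially immediate consequence of Theorem \ref{t:cg_submod}, once we have verified that the weighted rank function $w^{*}$ is monotone and submodular. Both ingredients are already available in the paper: Theorem \ref{t:cg_submod} supplies the $e/(e-1)$ bound for arbitrary monotone submodular functions, and the matroid preliminaries paragraph explicitly lists that ``the weighted rank function of a matroid is monotone and submodular.'' So the body of the proof really just needs to knit these facts together, taking $f = w^{*}$ in Theorem \ref{t:cg_submod}.

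For completeness I would quickly sketch why $w^{*}$ has these two properties. Monotonicity is immediate from the definition: if $S \subseteq T$, any independent subset of $S$ is an independent subset of $T$, so the maximum weight of an independent subset can only grow, giving $w^{*}(S) \leq w^{*}(T)$. For submodularity, I would appeal to the matroid property (1) recalled in Section \ref{s:setting}: greedy on a set $S$ outputs an independent set of total weight $w^{*}(S)$. A standard exchange argument then shows that for any $S \subseteq T$ and $e \notin T$, adding $e$ to the greedy output on $S$ gives at least as large a marginal gain as adding $e$ to the greedy output on $T$, which is the diminishing marginal returns characterization of submodularity. (Alternatively one can simply cite this as a classical matroid fact, e.g.\ Oxley \cite{Ox-92}, to which the paper already refers.)

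With monotonicity and submodularity in hand, applying Theorem \ref{t:cg_submod} to $w^{*}$ yields
\[
\sup_{\mathcal{D}}\frac{E_{S\sim\mathcal{D}}[w^{*}(S)]}{E_{S\sim\mathcal{I}(\mathcal{D})}[w^{*}(S)]} \leq \frac{e}{e-1},
\]
for every choice of nonnegative weights $(w_i)_{i \in N}$, which is the content of the lemma. There is no real obstacle here; the only mild subtlety is the degenerate case where $w^{*}$ is identically zero or the distribution places all mass on the empty set, which is handled by the $\frac{0}{0}=1$ convention from the correlation gap definition. The genuinely nontrivial inputs to this argument (Theorem \ref{t:cg_submod} and the submodularity of the weighted matroid rank) are imported black-box from the submodular functions and matroid theory literature.
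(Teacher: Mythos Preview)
Your proposal is correct and takes essentially the same approach as the paper: both simply observe that the weighted rank function of a matroid is monotone and submodular (a fact already recorded in the preliminaries) and then invoke Theorem~\ref{t:cg_submod}. If anything, your write-up is more detailed than the paper's, which dispatches the lemma in a single sentence.
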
 

\begin{proof}

This lemma follows from the fact that the weighted rank function of
a matroid is monotone and submodular, and that the correlation gap
of a monotone submodular function is at most $e/(e-1)$.

\end{proof}

\subsection{\label{sec:k-uniform}$k$-Unit Auctions}

$k$-Unit auctions form an important sub-class of a matroid environments.
The feasibility constraints of a $k$-unit auction is modeled by a
$k$-uniform matroid. In the following, we precisely quantify the
correlation gap of the weighted rank function of $k$-uniform matroids.

For a $k$-uniform matroid over $n$ elements, the (unweighted) rank
function is $f_{n}^{k}(S)=\min(|S|,k)$ for $S\subseteq N=\{1,\dots,n\}$.
We drop superscript and subscript when the context is clear. It is
easy to verify that $f$ is monotone and submodular. Define the multi-linear
extension $Ef(\mathbf{q})$ for\textbf{ $\mathbf{q}\in[0,1]^{n}$
}(in the sense of \cite{CCPV07}) as the expectation of $f(S)$ where
each $i\in N$ is included in $S$ with probability $q_{i}$ independently.
As was shown in \cite{CCPV07}, or can be easily verified using definitions,
if $f$ is submodular, then $Ef$ satisfies cross-convexity, in the
sense that $\frac{\partial^{2}Ef(\mathbf{q})}{\partial q_{i}\partial q_{j}}\leq0$
for all $\mathbf{q}\in(0,1)^{n}$ and $i\neq j$.

For all $n$ and $0\leq k\leq n$, define $\Phi(n,k)$ as the minimum
of $Ef_{n}^{k}(\mathbf{q})$ over all marginal probability vector
$\mathbf{q}$ such that $\sum_{i\in N}q_{i}=k$. In the following
lemma, we identify the probability vector $\mathbf{q}$ that minimizes
$Ef(\mathbf{q})$ subject to this constraint, and show several useful
properties about $\Phi(n,k)$. This lemma is interesting in itself,
and in fact can be used to improve the analysis of an SPM in \cite{SY10}. 

\begin{lemma} 

\label{lem:Phi}The following holds for $\Phi(n,k)$:
\begin{description}
\item [{{(a)}}] $\Phi(n,k)=Ef_{n}^{k}(\mathbf{q)}$ where $q_{i}=k/n$
for all $i\in\{1,\dots,n\}$. In other words, $\Phi(n,k)$ is the
expected value of $\min(X,k)$, where $X$ is a binomial random variable
with parameters $n$ and $k/n$. 
\item [{{(b)}}] $\Phi(n,k)$\emph{ }monotonely\emph{ }increases\emph{
}with\emph{ $k$, }and\emph{ }monotonely\emph{ }decreases\emph{ }with\emph{
$n$.} 
\item [{{(c)}}] $\lim_{n\to\infty}\Phi(n,k)=k-\frac{k^{k+1}}{e^{k}k!}\approx k-\frac{k}{\sqrt{2\pi k}}$. 
\end{description}
\end{lemma}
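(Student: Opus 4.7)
My plan is to establish the three parts in order, with part (a) serving as the engine for parts (b) and (c).

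For (a), I will exploit the \emph{symmetry} of $f_n^k$ (which depends only on $|S|$) together with the \emph{cross-convexity} of the multilinear extension. Because $Ef_n^k$ is multilinear in $\mathbf{q}$ and symmetric under coordinate permutations, fixing all coordinates except a chosen pair $(q_i, q_j)$ yields a function of the form $A + B(q_i + q_j) + C q_i q_j$, where the common linear coefficient $B$ reflects symmetry and $C = \partial^2 Ef_n^k/\partial q_i \partial q_j \le 0$ reflects submodularity. Substituting $q_j = s - q_i$ gives a parabola in $q_i$ with leading coefficient $-C \ge 0$ and axis of symmetry at $q_i = s/2$, so the ``averaging'' operation $(q_i, q_j) \mapsto ((q_i+q_j)/2, (q_i+q_j)/2)$ can only decrease $Ef_n^k$. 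Iterating this operation, together with compactness of the feasible simplex $\{\mathbf{q} \in [0,1]^n : \sum q_i = k\}$ and continuity of $Ef_n^k$, forces a minimizer to be unable to admit any unequal coordinate pair, and hence to equal the uniform vector $(k/n, \ldots, k/n)$. The resulting value is $E[\min(|S|, k)]$ where $|S|$ is a sum of $n$ i.i.d.\ Bernoullis with mean $k/n$, i.e., a $\mathrm{Bin}(n, k/n)$ random variable.

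For (b), I deduce both monotonicities from (a). Monotonicity in $n$ is immediate by \emph{padding}: the $(n+1)$-vector $(k/n, \ldots, k/n, 0)$ is feasible for $\Phi(n+1, k)$ and yields the same multilinear extension value as the dimension-$n$ minimizer (since the padded coordinate never appears), so $\Phi(n+1, k) \le \Phi(n, k)$. Monotonicity in $k$ follows from a standard \emph{monotone coupling}: using i.i.d.\ uniforms $U_1, \ldots, U_n$ on $[0,1]$, set $X = |\{i : U_i \le k/n\}|$ and $Y = |\{i : U_i \le (k+1)/n\}|$, so that $X \le Y$ pointwise; then $\min(X, k) \le \min(X, k+1) \le \min(Y, k+1)$, and taking expectations gives $\Phi(n, k) \le \Phi(n, k+1)$.

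For (c), since $X_n \sim \mathrm{Bin}(n, k/n)$ converges in distribution to $Z \sim \mathrm{Poisson}(k)$ and $\min(\cdot, k)$ is bounded by $k$, bounded convergence gives $\lim_{n\to\infty} \Phi(n, k) = E[\min(Z, k)] = k - E[(Z-k)^+]$. To evaluate the tail term I split $E[(Z-k)^+] = \sum_{j>k} j\, e^{-k} k^j/j! - k \sum_{j>k} e^{-k} k^j/j!$; the first sum simplifies via $j/j! = 1/(j-1)!$ to $k \cdot \Pr[Z \ge k]$, the second equals $k \cdot \Pr[Z \ge k+1]$, and their difference telescopes to $k \cdot \Pr[Z = k] = k^{k+1}/(e^k k!)$. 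Stirling's approximation $k! \sim \sqrt{2\pi k}(k/e)^k$ then yields the stated $\sim k/\sqrt{2\pi k}$ asymptotic. I expect the main subtlety to lie in the averaging step of (a): making sure that symmetry and cross-convexity together truly pin the minimizer at the uniform vector rather than at some non-uniform stationary point; parts (b) and (c) are essentially computational once (a) is in place.
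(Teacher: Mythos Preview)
Your proposal is correct and follows essentially the same approach as the paper: part (a) uses symmetry plus cross-convexity of the multilinear extension to show that averaging any two coordinates can only decrease $Ef_n^k$, then iterates to the uniform vector; part (b) uses the same padding argument for monotonicity in $n$ (the paper dispatches monotonicity in $k$ in one line by noting $f_n^k(S)$ is increasing in $k$, whereas your coupling argument is more explicit but to the same end); part (c) is the same Poisson-limit computation, which the paper carries out by directly taking limits of the binomial terms and you package via bounded convergence and the identity $E[(Z-k)^+]=k\Pr[Z=k]$.
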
 

\begin{proof} 

To prove (a), first for an arbitrary marginal probability vector \textbf{$\mathbf{q}\in[0,1]^{n}$},
consider vector $\overline{\mathbf{q}}$ that is the same as $\mathbf{q}$
except that the $i$-th and $j$-th components are averaged for some
$i\neq j$, i.e., $\overline{q}_{i}=\overline{q}_{j}=(q_{i}+q_{j})/2$.
We show that $Ef(\overline{\mathbf{q}})\leq Ef(\mathbf{q})$. Let
\textbf{$\mathbf{q}'$} be the same as $\mathbf{q}$ except with the
$i$-th and $j$-th components switched, i.e., $q_{i}'=q_{j}$ and
$q_{j}'=q_{i}$. By symmetry of $f$, $Ef(\mathbf{q})=Ef(\mathbf{q}')$,
and $\overline{\mathbf{q}}$ is the middle-point of $\mathbf{q}$
and $\mathbf{q}'$. By the cross-convexity of $Ef$, the value of
$Ef$ is convex in the line segment connecting $\mathbf{q}$ and\textbf{
$\mathbf{q}'$}. Therefore $Ef(\overline{\mathbf{q}})$ is at most
the average of $Ef(\mathbf{q})$ and $Ef(\mathbf{q}')$, or simply
$Ef(\mathbf{q)}$. Now starting with an arbitrary \textbf{$\mathbf{q}$},
by repeatedly averaging the maximum and minimum components of $\mathbf{q}$
this way, the value of $Ef(\mathbf{q})$ keeps decreasing, while all
$q_{i}$'s converge to $k/n$. By the continuity of $Ef(\mathbf{q})$
in $\mathbf{q}$, the value of $Ef(\mathbf{q})$ converges to the
value of $Ef$ at $q_{i}=k/n$ for all $i$. Therefore $Ef(\mathbf{q})$
is minimized at $q_{i}=k/n$ for all $i$.

To show (b), it is obvious that $\Phi(n,k)$ is monotonely increasing
in $k$, because $f_{n}^{k}(S)$ is increasing in $k$. It suffices
to show that $\Phi(n,k)$ is monotonely decreasing in $n$. Recall
that $\Phi(n,k)$ was defined to be the optimal value of a minimization
problem. To relate $\Phi(n,k)$ to $\Phi(n+1,k)$, we cast the optimal
solution underlying $\Phi(n,k)$, which is an $n$-dimensional independent
distribution, to $(n+1)$-dimensional space, such that it gives a
candidate solution to the minimization problem underlying $\Phi(n+1,k)$.
To be specific, we observe that $\Phi(n,k)$ is equal to $Ef_{n+1}^{k}(\mathbf{q})$,
where $\mathbf{q}$ is an $(n+1)$-dimensional vector with $q_{i}=k/n$
for $i=1,\dots,n$, and $q_{n+1}=0$. By definition of $\Phi(n+1,k)$,
$\Phi(n+1,k)\geq Ef_{n+1}^{k}(\mathbf{q})=\Phi(n,k)$.

We leave the derivation of (c) to the appendix. 

\end{proof} 

Based on Lemma \ref{lem:Phi}, we can first quantify the correlation
gap of the unweighted rank function, and then extend it to the weighted
case. \begin{lemma} 

\label{lem:k-uniform}For $n,k\geq1$, the correlation gap of the
function $f(S)=\min(|S|,k)$ for $S\subseteq N=\{1,\dots,n\}$ is
exactly $\frac{k}{\Phi(k,n)}$.

\end{lemma}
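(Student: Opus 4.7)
The plan is to show matching lower and upper bounds of $k/\Phi(n,k)$. For the lower bound I would exhibit the distribution $\mathcal{D}$ that is uniform over size-$k$ subsets of $N$. Under $\mathcal{D}$, $|S|=k$ almost surely so $E_{\mathcal{D}}[f(S)]=k$, while the marginals are $q_i=k/n$ for every $i$, which gives $E_{\mathcal{I}(\mathcal{D})}[f(S)]=\Phi(n,k)$ by the definition of $\Phi$. The witnessed ratio is exactly $k/\Phi(n,k)$.

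For the upper bound, fix any $\mathcal{D}$ with marginals $\mathbf{q}$ and put $t=\sum_i q_i$. First I would symmetrize $\mathcal{D}$ by applying a uniformly random permutation of $N$, obtaining $\mathcal{D}'$ with uniform marginals $t/n$. Because $f$ depends only on $|S|$, symmetrization preserves the numerator; and the averaging argument in the proof of Lemma \ref{lem:Phi}(a) (which only uses $\sum q_i$ and carries over to real marginal sums, not just integer $k$) shows that the independent-version expectation is minimized at the uniform marginal, so it only decreases. So I may assume the marginals are $t/n$ and the denominator is $\Phi(n,t)$, with $\Phi(n,\cdot)$ extended to reals in the obvious way. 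Applying Jensen's inequality to the concave function $\min(\cdot,k)$, the numerator is at most $\min(E[|S|],k)=\min(t,k)$.

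It then remains to prove $\min(t,k)/\Phi(n,t)\leq k/\Phi(n,k)$. I would split into cases. When $t\geq k$, the inequality reduces to $\Phi(n,t)\geq\Phi(n,k)$, which follows from a coupling: $\mathrm{Bin}(n,t/n)$ stochastically dominates $\mathrm{Bin}(n,k/n)$, and $\min(\cdot,k)$ is nondecreasing. When $t\leq k$, the inequality becomes $\Phi(n,t)/t\geq\Phi(n,k)/k$, so I need the map $t\mapsto\Phi(n,t)/t$ to be nonincreasing on $[0,k]$.

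The main obstacle is this last monotonicity. I would deduce it from concavity of $\Phi(n,t)$ in $t$, combined with the boundary value $\Phi(n,0)=0$ (a standard implication for concave functions vanishing at the origin). Concavity itself follows from the cross-convexity of the multi-linear extension already invoked in Section \ref{sec:k-uniform}: writing $\Phi(n,t)=Ef(t/n,\dots,t/n)$ and differentiating twice in $t$, multilinearity kills all pure second derivatives and leaves $\Phi''(n,t)=\frac{1}{n^{2}}\sum_{i\neq j}\partial_i\partial_j Ef\leq 0$. This closes the upper bound and matches the lower bound at exactly $k/\Phi(n,k)$.
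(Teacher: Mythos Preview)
Your argument is correct, and the overall skeleton---reduce to the symmetric marginal $q_i=t/n$ and then study how the ratio depends on the total mass $t=\sum_i q_i$---matches the paper's. The two proofs differ in how they execute both steps. For the reduction, the paper first computes $E_{\mathcal{O}_{\mathbf q}}[f(S)]=\min(\sum_i q_i,k)$ via a polyhedral decomposition and then (implicitly, through Lemma~\ref{lem:Phi}(a)) passes to the symmetric denominator; you instead symmetrize $\mathcal{D}$ itself by a random permutation, which leaves the numerator unchanged since $f$ depends only on $|S|$ and can only shrink the denominator by the same averaging argument. For the key monotonicity in $t$, the paper derives the combinatorial identity that the reciprocal ratio equals $E[\min(X+1,k)/(X+1)]$ with $X\sim\mathrm{Bin}(n-1,r/n)$ and shows via an integral representation that this decreases in $r$; you instead observe that $\Phi(n,t)=Ef(t/n,\dots,t/n)$ is concave in $t$ because the pure second partials of a multilinear extension vanish and the mixed ones are nonpositive by cross-convexity, and then use $\Phi(n,0)=0$ to conclude $\Phi(n,t)/t$ is nonincreasing. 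Your route is cleaner and more conceptual, avoiding the explicit binomial manipulation; the paper's route is more hands-on and yields the specific identity as a by-product. You also dispose of the case $t\ge k$ directly via stochastic dominance of $\mathrm{Bin}(n,t/n)$ over $\mathrm{Bin}(n,k/n)$, whereas the paper argues by lowering the $q_i$'s. One small point worth making explicit in a write-up: when you extend $\Phi(n,\cdot)$ to real arguments you are fixing the function $f=\min(\cdot,k)$ and only varying the marginal sum, so that $\Phi(n,t)=E[\min(X,k)]$ with $X\sim\mathrm{Bin}(n,t/n)$; this is what your concavity computation actually proves.
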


\begin{proof} For any probability vector $\mathbf{q}$, let $\mathcal{O}_{\mathbf{q}}$
be the distribution over $2^{N}$ with marginal probabilities $\mathbf{q}$
that maximizes $E_{S\sim\mathcal{O}_{\mathbf{q}}}[f(S)]$. We first
show that $E_{S\sim\mathcal{O}_{\mathbf{q}}}[f(S)]$ equals $\sum_{i}q_{i}$
if $\sum_{i}q_{i}\leq k$, and equals $k$ otherwise. (1) Suppose
$\sum_{i}q_{i}\leq k$. First note that $E_{S\sim\mathcal{O}_{\mathbf{q}}}[f(S)]\leq E_{S\sim\mathcal{O}_{\mathbf{q}}}[|S|]=\sum_{i}q_{i}$.
Moreover, $\mathbf{q}$ can be seen as a point inside the integral
polytope with (characteristic vectors of) feasible sets (sets of size
at most $k$) as vertices. Then by standard polyhedral combinatorics,
one can decompose this point as a convex combination of the vertices,
which corresponds to a distribution over feasible sets with marginal
probabilities $\mathbf{q}$. This distribution gives expected $f$
value $\sum_{i}q_{i}$. (2) If $\sum_{i}q_{i}>k$, then by the monotonicity
of $E_{S\sim\mathcal{O}_{\mathbf{q}}}[f(S)]$ in $\mathbf{q}$, $E_{S\sim\mathcal{O}_{\mathbf{q}}}[f(S)]$
is at least $k$. However it is also upper-bounded by $k$ as $f$
is upper-bounded by $k$. Therefore $E_{S\sim\mathcal{O}_{\mathbf{q}}}[f(S)]=k$
in this case.

Suppose that $\mathbf{q}$ maximizes the {}``gap ratio'' $\frac{E_{S\sim\mathcal{O}_{\mathbf{q}}}[f(S)]}{E_{S\sim\mathbf{q}}[f(S)]}$.
We first show that $r=\sum_{i}q_{i}\leq k$. If this is not the case,
then by lowering the $q_{i}$'s such that $\sum_{i}q_{i}=k$, $E_{S\sim\mathbf{q}}[f(S)]$
strictly decreases, while $E_{S\sim\mathcal{O}_{\mathbf{q}}}[f(S)]$
is still $k$. This gives a strictly higher gap ratio, contrary to
that assumption that $\mathbf{q}$ maximizes the gap ratio.

Next we show that $r=k$. For $r\leq k$, we can explicitly express
the reciprocal of the gap ratio as: \begin{eqnarray*}
 &  & \frac{1}{r}\cdot\sum_{t=0}^{n}{n \choose t}\cdot\left(\frac{r}{n}\right)^{t}\cdot\left(\frac{n-r}{n}\right)^{n-t}\cdot\min(t,k)\\
 & = & \sum_{t=1}^{n}{n-1 \choose t-1}\left(\frac{r}{n}\right)^{t-1}\left(\frac{n-r}{n}\right)^{n-t}\cdot\frac{\min(t,k)}{t}\end{eqnarray*}

This is equal to the expectation of $\frac{\min(X+1,k)}{X+1}$ where
$X$ is the binomial random variable with parameters $n-1$ and $r/n$.
It is also equal to $\int_{0}^{\infty}Pr[\frac{\min(X+1,k)}{X+1}\geq x]dx$.
Note that for $x>1$, $Pr[\frac{\min(X+1,k)}{X+1}\geq x]=0$, and
otherwise $Pr[\frac{\min(X+1,k)}{X+1}\geq x]=Pr[X+1\leq k/x]$, where
$Pr[X+1\leq k/x]$ strictly decreases as $r$ increases. Therefore
the gap ratio is maximized at $r=k$.\end{proof} 

\begin{lemma} \label{lem:k-uniform-weighted}For $n,k\geq1$, the
correlation gap of the weighted rank function of a $k$-uniform matroid
of size $n$ is at most $\frac{k}{\Phi(k,n)}$.

\end{lemma}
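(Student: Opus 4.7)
The plan is to reduce the weighted case to the unweighted case via a level-set decomposition. Without loss of generality sort the weights so that $w_1 \geq w_2 \geq \cdots \geq w_n \geq 0$, and for each $t \geq 0$ let $L_t = \{i \in N : w_i \geq t\}$. A direct case check (splitting on whether $|S| \leq k$ or $|S| > k$) shows that for every $S \subseteq N$,
\[
  w^*(S) \;=\; \int_0^\infty \min(|S \cap L_t|,\, k)\, dt \;=\; \int_0^\infty f^k(S \cap L_t)\, dt,
\]
where $f^k(T) = \min(|T|, k)$ is the unweighted $k$-uniform rank function. Thus the weighted rank function is a non-negative integral of unweighted $k$-uniform rank functions living on the sublevel sets $L_t$.

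Fix any distribution $\mathcal{D}$ on $2^N$ with marginals $\mathbf{q} = (q_i)_{i \in N}$. Interchanging expectation and integral by Tonelli,
\[
  E_{S \sim \mathcal{D}}[w^*(S)] \;=\; \int_0^\infty E_{S \sim \mathcal{D}}[f^k(S \cap L_t)]\, dt,
\]
and the analogous identity holds for $\mathcal{I}(\mathcal{D})$. The key observation is that, for each fixed $t$, the restriction $S \cap L_t$ under $\mathcal{D}$ is a random subset of $L_t$ with marginals $(q_i)_{i \in L_t}$, while under $\mathcal{I}(\mathcal{D})$ it is by the product structure of $\mathcal{I}(\mathcal{D})$ precisely the fully independent distribution with those same marginals. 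Hence Lemma \ref{lem:k-uniform}, applied to the $k$-uniform matroid on the ground set $L_t$, yields
\[
  E_{S \sim \mathcal{D}}[f^k(S \cap L_t)] \;\leq\; \frac{k}{\Phi(k, |L_t|)} \cdot E_{S \sim \mathcal{I}(\mathcal{D})}[f^k(S \cap L_t)].
\]

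Since $|L_t| \leq n$ and $\Phi$ is monotonely decreasing in its size argument by Lemma \ref{lem:Phi}(b), we have $\Phi(k, |L_t|) \geq \Phi(k, n)$, so the pointwise multiplier is uniformly bounded by $k/\Phi(k, n)$. Pulling this constant out of the integral and folding the integrals back into weighted ranks via the decomposition yields
\[
  E_{S \sim \mathcal{D}}[w^*(S)] \;\leq\; \frac{k}{\Phi(k, n)} \cdot E_{S \sim \mathcal{I}(\mathcal{D})}[w^*(S)],
\]
which is the claimed bound. There is no serious obstacle beyond setting up the decomposition cleanly; the only minor subtlety is verifying that restriction of the independent coupling to a sublevel set still factorizes, which is immediate from the definition of $\mathcal{I}(\mathcal{D})$.
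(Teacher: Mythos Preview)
Your argument is correct and is essentially the paper's own proof: both write the weighted rank as a nonnegative (level-set) combination of unweighted $k$-uniform rank functions on the nested prefix sets, invoke Lemma~\ref{lem:k-uniform} on each piece, and then use the monotonicity of $\Phi$ in the size argument from Lemma~\ref{lem:Phi}(b) to replace each $\Phi(k,|L_t|)$ by $\Phi(k,n)$. The only cosmetic difference is that you phrase the decomposition as an integral $\int_0^\infty f^k(S\cap L_t)\,dt$, whereas the paper writes the equivalent discrete sum $w^*(S)=\sum_{i}(w_i-w_{i+1})\,f(S\cap\{1,\dots,i\})$; since $L_t$ is constant and equal to $\{1,\dots,i\}$ on each interval $(w_{i+1},w_i]$, these are the same identity.
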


\begin{proof} Again let $f(S)=\min(|S|,k)$ for $S\subseteq N=\{1,\dots,n\}$.
Assume w.l.o.g.\ that $w_{1}\geq w_{2}\geq\dots\geq w_{n}$, and
let $w_{n+1}=0$ for convenience. The weighted rank function $w^{*}(S)$
can be written as $\sum_{i\in N}(w_{i}-w_{i+1})\cdot f(S\cap\{1,\dots,i\})$,
a conic combination of unweighted rank functions. The correlation
gap of $w^{*}$ is therefore witnessed by the correlation gap of $f(S\cap\{1,\dots,i\})$
for some $i$, and hence it equals $\sup_{1\leq i\leq n}k/\Phi(i,k)$.
By Lemma \ref{lem:Phi}(b), $\Phi(i,k)$ is decreasing in $i$, and
hence the correlation gap of $w^{*}$ is $k/\Phi(n,k)$.

\end{proof}

\begin{Remark} We cannot generalize Lemma \ref{lem:k-uniform} or
\ref{lem:k-uniform-weighted} to work for arbitrary matroids with
rank $k$. For any $k$, consider the partition matroids with $k$
parts, each of size $n$, where a feasible set can only have at most
one element from each part. The rank of such a matroid is $k$, while
the correlation gap is the same as that of a 1-uniform matroid over
$n$ elements, which approaches $e/(e-1)$ as $n$ increases. \end{Remark}

\subsection{\label{sec:p-ind}$p$-Independent Environments}

There are interesting auction constraints that cannot be modeled by
matroids, but can be modeled by $p$-independent set systems. In a
set system $(N,\mathcal{I})$, a base of a subset $S\subseteq N$
is a maximal feasible subset of $S$. A set system $(N,\mathcal{I})$
is a $p$-independent system if for any non-empty subset $S$ of $N$:\[
\frac{\mbox{maximum size of a base of }S}{\mbox{minimum size of a base of }S}\leq p.\]

For example, a matroid is $1$-independent, and vice versa. The edge
sets of (non-bipartite) matchings of a graph form a $2$-independent
system (but in general cannot be cast as the intersection of a constant
number of matroids). The intersection of $p$ matroids is $p$-independent.
The feasible sets of agents in single-minded combinatorial auctions
with bounded bundle size $p$ form a $p$-independent system.

It is well-known that the greedy algorithm gives a $p$-approximation
for $p$-independent systems \cite{Jen76}. For our purpose, it suffices
to prove the following lemma, by combining arguments of \cite{CCPV07,CHMS10}.

\begin{lemma} 

\label{l:greedy_ind} The greedy algorithm verifies a correlation
gap of $p+1$ for $p$-independent system constraints.

\end{lemma}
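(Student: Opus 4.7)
The plan combines two ingredients: Jenkyns' classical pointwise $p$-approximation, namely $w(greedy(S)) \ge w^*(S)/p$ on every $S$ for a $p$-independent system, and a correlation-gap style argument that pays an additional factor when replacing the correlated $\mathcal{D}$ by the independent $\mathcal{I}(\mathbf{q})$ with the same marginals. The high-level strategy follows the techniques of \cite{CCPV07,CHMS10}.

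First, I would upper-bound $E_{\mathcal{D}}[w^*(S)]$ by the concave-closure quantity $w^+(\mathbf{q}) := \max\{\sum_i x_i w_i : \mathbf{x} \in P_{\mathcal{I}} \cap [0, \mathbf{q}]\}$, where $P_{\mathcal{I}}$ is the convex hull of indicator vectors of feasible sets. This holds because, on each realization, $w^*(S)=w(T_S)$ for some feasible $T_S\subseteq S$, so $E_{\mathcal{D}}[w^*]$ equals a combination $\sum_T \gamma_T\, w(T)$ with marginals $\sum_{T \ni i}\gamma_T \le q_i$, which is feasible for the LP defining $w^+$.

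Next, decompose the LP optimum $\mathbf{x}^*=\sum_T \alpha_T \mathbf{1}_T$, so $w^+(\mathbf{q})=\sum_T \alpha_T w(T)$. For each feasible $T$, since $T$ is a feasible subset of $S\cup T$, Jenkyns' theorem applied to $S\cup T$ gives $w(greedy(S\cup T)) \ge w(T)/p$; averaging over $S\sim\mathcal{I}(\mathbf{q})$ and $T\sim\alpha$ yields
\[
E_{S,T}[w(greedy(S\cup T))] \;\ge\; w^+(\mathbf{q})/p.
\]
The final and most delicate step is a stability bound of the form
\[
E_{S,T}[w(greedy(S\cup T))] \;\le\; \tfrac{p+1}{p}\, E_S[w(greedy(S))],
\]
which, chained with the previous display, gives $E_S[w(greedy(S))]\ge w^+(\mathbf{q})/(p+1)\ge E_{\mathcal{D}}[w^*(S)]/(p+1)$, as required.

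The main obstacle is establishing this last stability bound. The intuition is that since $T$'s marginals are bounded by $\mathbf{q}$, $T$ is ``mostly already covered'' by the random $S$, so augmenting $S$ by $T$ only boosts the greedy output by a $1/p$ fraction on average: each element of the residual $T\setminus S$ either enters $greedy(S\cup T)$ (contributing its own weight) or is blocked by at most $p$ incumbent greedy elements, by the $p$-independence exchange property underlying Jenkyns' theorem. Carrying out this charging rigorously in expectation---without appealing to submodularity of the rank function, which genuinely fails once $p>1$---is the technical crux, and it is precisely here that the combination of the arguments from \cite{CCPV07} (exchange/averaging for the correlation gap) and \cite{CHMS10} (handling the non-matroidal feasibility structure via the $p$-independence exchange) is needed.
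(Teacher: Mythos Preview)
Your proposed chain hinges on the ``stability bound''
\[
E_{S\sim\mathcal{I}(\mathbf q),\,T\sim\alpha}\bigl[w(greedy(S\cup T))\bigr]\;\le\;\tfrac{p+1}{p}\,E_{S\sim\mathcal{I}(\mathbf q)}\bigl[w(greedy(S))\bigr],
\]
and this inequality is \emph{false} for every $p>1$. Take the $p$-independent system on ground set $\{0,1,\dots,p\}$ whose feasible sets are $\{0\}$ together with all subsets of $\{1,\dots,p\}$; give element $0$ weight $1+\epsilon$ and the others weight $1$. With $q_i=q$ for all $i$ and $q$ small, one computes $E_S[w(greedy(S))]=q(1+\epsilon)+(1-q)pq\approx q(p+1)$. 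Taking the decomposition $T=\{0\}$ with probability $q$, $T=\{1,\dots,p\}$ with probability $q$, and $T=\emptyset$ otherwise (which realizes the LP optimum $x^*=(q,\dots,q)$), a direct calculation gives $E_{S,T}[w(greedy(S\cup T))]\approx 2q(p+1)$. The ratio is therefore $\approx 2$, not $(p+1)/p$; plugging $\alpha=2$ back into your chain yields only a $2p$-approximation. The intuition you sketch---that $T$'s marginals are dominated by $\mathbf q$ so $T$ is ``mostly already in $S$''---breaks down precisely when the $q_i$'s are small, because then $S$ and $T$ are nearly disjoint and adding $T$ genuinely doubles coverage.

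The paper avoids the LP relaxation $w^+$ and any $S\cup T$ coupling altogether. Instead it runs greedy directly on $S\sim\mathcal{I}(\mathbf q)$ and partitions the ground set, realization by realization, into elements that greedy \emph{checks} (those $i$ with $A+i\in\mathcal{I}$ when visited) versus those it \emph{ignores}. Writing $\tilde q_i$ for the probability that $i$ lies in the optimal feasible subset under $\mathcal{D}$, the key identity is that a checked element contributes $q_iw_i\ge\tilde q_iw_i$ in expectation (its status is determined before its own coin is flipped), whence the potential $Q=E[\sum_{i\in A}w_i+\sum_{i\in U}\tilde q_iw_i]$ dominates $\sum_i\tilde q_iw_i=E_{\mathcal{D}}[w^*(S)]$. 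A deterministic charging argument---using that $\{i_1,\dots,i_j\}$ is a base of the first $j$ blocks and that $\sum_{i\in B_1\cup\dots\cup B_j}\tilde q_i$ is bounded by the rank, hence by $pj$---then shows $w(A)\ge\frac1p\sum_{i\in U}\tilde q_iw_i$ on every realization, which combines with the potential bound to give the factor $p+1$. The point is that the extra ``$+1$'' comes from the checked/ignored split, not from any comparison between $greedy(S)$ and $greedy(S\cup T)$.
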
 

This ratio of $p+1$ is tight, up to lower order terms. 

\begin{proposition} 

\label{pro:p_ind_lb}For any sufficiently large positive integer $p$,
there is a $p$-independent set system with correlation gap at least
$p/\log p$. 

\end{proposition}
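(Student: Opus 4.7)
The plan is to exhibit an explicit ``union-of-columns'' construction and directly compute its correlation gap. Let $N$ be a ground set of $p^{2}$ elements partitioned into $p$ disjoint \emph{columns} $C_{1},\dots,C_{p}$ of size $p$ each, and let the feasible family be $\mathcal{I}=\{S\subseteq N:S\subseteq C_{i}\text{ for some }i\}$, which is downward-closed by construction.

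I would first verify $p$-independence. For any nonempty $S\subseteq N$, the maximal feasible subsets of $S$ are precisely the nonempty intersections $S\cap C_{i}$: each lies in a single column so is feasible, cannot be extended within $S$ without crossing columns, and conversely any maximal feasible $T\subseteq S$ sits inside the unique column $C_{j}$ containing it and must equal $S\cap C_{j}$ (else one could extend $T$ by an element of $(S\cap C_{j})\setminus T$ while staying in $C_{j}$). Hence the ratio of max to min base size of $S$ is at most $\max_{i}|S\cap C_{i}|/\min_{i:\,S\cap C_{i}\neq\emptyset}|S\cap C_{i}|\leq p/1=p$.

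For the correlation gap I would take the unweighted rank function $f=\mathrm{rank}$ together with the distribution $\mathcal{D}$ that picks one of the $p$ columns uniformly at random. Then $E_{S\sim\mathcal{D}}[f(S)]=p$ deterministically, and by symmetry the marginals are $q_{j}=1/p$ for every $j\in N$. Under $\mathcal{I}(\mathcal{D})$ the intersections $|S\cap C_{i}|$ are i.i.d.\ Binomial$(p,1/p)$, and the characterization above gives $f(S)=\max_{i}|S\cap C_{i}|$.

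Finally, a routine tail estimate $\Pr[|S\cap C_{i}|\geq k]\leq\binom{p}{k}p^{-k}\leq 1/k!$ and a union bound over the $p$ columns give $\Pr[\max_{i}|S\cap C_{i}|\geq k]\leq p/k!$; choosing $k=\Theta(\log p/\log\log p)$ then yields $E[\max_{i}|S\cap C_{i}|]=O(\log p/\log\log p)$. Therefore the correlation gap of $f$ is at least $p\log\log p/\log p$, which exceeds $p/\log p$ for all sufficiently large $p$. No real obstacle is anticipated; the only subtlety is the standard maximum-of-near-Poissons moment estimate that controls the independent-sample rank.
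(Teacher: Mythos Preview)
Your proposal is correct and is essentially the same construction and argument as the paper's: the paper's ``minisets'' $[a_{i}=b]$ are exactly your column elements, feasibility is ``contained in a single column,'' the correlated distribution is a uniformly random column, and the independent rank is $\max_{i}X_{i}$ with $X_{i}\sim\mathrm{Bin}(p,1/p)$. Your tail estimate is in fact slightly sharper (yielding $O(\log p/\log\log p)$ rather than the paper's $O(\log p)$), but both comfortably give the claimed $p/\log p$ lower bound.
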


\section{Conclusion}

We summarize the main observation of this paper as follows. For revenue
and welfare maximization, the approximation ratio of certain SPM compared
to the optimal mechanism is inherently related to the correlation
gap of the weighted rank function of set system that models the feasibility
constraints. In particular for matroid environments, the weighted
rank functions have small correlation gap, which explains why SPMs
give good approximation guarantees in these settings.

Moreover, our point is made stronger by the fact that we are proving
guarantees for a very restricted type of SPMs, where prices and offering
order have to be predetermined. Our observation can be used as a guideline
for the design and analysis of more relaxed types of SPMs, which seems
to be an interesting research direction.

\textbf{Acknowledgment} This research is made possible by learning
from two {}``right'' groups of people, people behind SPMs: Shuchi
Chawla, Jason Hartline, David Malec, and Balu Sivan, and people behind
correlation gap: Shipra Agrawal and Jan Vondrák. I thank them for
sharing their intuition about the subject with me. I also thank Tim
Roughgarden, who knows both subjects, for various valuable suggestions.

\bibliographystyle{alpha} \bibliographystyle{alpha}
\bibliography{auctions}

\section{Proof of Lemma \ref{lem:Phi}}

\begin{proof} 

We derive the asymptotics for $\Phi(n,k)$ as follows, where the last
step is by Stirling's approximation of factorials. \begin{eqnarray*}
 &  & \lim_{n\to\infty}\Phi(n,k)\\
 & = & \lim_{n\to\infty}\sum_{t=0}^{n}{n \choose t}\cdot\left(\frac{k}{n}\right)^{t}\cdot\left(\frac{n-k}{n}\right)^{n-t}\cdot\min(t,k)\\
 & = & \lim_{n\to\infty}\sum_{t=0}^{k-1}{n \choose t}\cdot\left(\frac{k}{n}\right)^{t}\cdot\left(\frac{n-k}{n}\right)^{n-t}\cdot t\\
 &  & +k\cdot\left(1-\sum_{t=0}^{k-1}{n \choose t}\cdot\left(\frac{k}{n}\right)^{t}\cdot\left(\frac{n-k}{n}\right)^{n-t}\right)\\
 & = & \sum_{t=0}^{k-1}\frac{k^{t}}{t!}\cdot\frac{1}{e^{k}}\cdot t+k\cdot\left(1-\sum_{t=0}^{k-1}\frac{k^{t}}{t!}\cdot\frac{1}{e^{k}}\right)\\
 & = & k\cdot\left(1-\frac{k^{k}}{e^{k}k!}\right)\approx k\cdot\left(1-\frac{1}{\sqrt{2\pi k}}\right).\end{eqnarray*}

\end{proof}

\subsection{Proof of Lemma \ref{l:greedy_ind}}

\begin{proof} Fix marginal probabilities $\mathbf{q}$. In the dependent
case, if $S$ is drawn from a distribution $\mathcal{D}$ with marginal
probabilities\textbf{ $\mathbf{q}$}, let $\tilde{q}_{i}$ be the
probability that $i$ is in the optimal feasible subset of $S$ (with
arbitrary fixed tie-breaking). we can rewrite $E_{S\sim\mathcal{D}}[w^{*}(S)]$
as $\sum_{i\in N}\tilde{q}_{i}w_{i}$.

Now consider the independent case, where each $i$ is in $S$ with
probability $q_{i}$ independently, which we denote by $S\sim\mathbf{q}$.
Let $A=g(S)$ be the agents allocated by running the greedy algorithm
on $S$. The expected performance of greedy is $E_{S\sim\mathcal{D}}[\sum_{i\in A}w_{i}]$\@.
An equivalent way of looking at running the greedy algorithm on the
random set $S$ is the following:
\begin{enumerate}
\item $A=\emptyset$
\item visit all agents $i\in N$ in decreasing order of weights:

\begin{enumerate}
\item if $A+i\in\mathcal{I}$, we check if $i$ is in $S$, and add $i$
into $A$ if yes. 
\item if $A+i\notin\mathcal{I}$, we ignore $i$. 
\end{enumerate}
\item output $A$ 
\end{enumerate}
Let random set $U$ be the set of agents that are ignored by greedy.
Consider the quantity $Q=E_{S\sim\mathbf{q}}[\sum_{i\in A}w_{i}+\sum_{i\in U}\tilde{q}_{i}w_{i}]$.
For every agent $i$, if she is checked by greedy, she contributes
$q_{i}w_{i}$ to the $Q$. (with probability $q_{i}$, $i$ is in
$S$, and we get weight $w_{i}$) On the other hand, if she is ignored,
she contributes $\tilde{q}_{i}w_{i}$ to $Q$. Therefore, \[
Q=E_{S\sim\mathbf{q}}[\sum_{i\in A}w_{i}+\sum_{i\in U}\tilde{q}_{i}w_{i}]\geq\sum_{i\in N}\tilde{q}_{i}w_{i}=E_{S\sim\mathcal{D}}[w^{*}(S)].\]

Next we show that $w(A)\geq\frac{1}{p}\sum_{i\in U}\tilde{q}_{i}w_{i}$,
and our theorem would follow as: \begin{eqnarray*}
E_{S\sim\mathbf{q}}[\sum_{i\in greedy(S)}w_{i}] & = & E_{S\sim\mathbf{q}}[\sum_{i\in A}w_{i}]\\
 & \geq & \frac{1}{p+1}E_{S\sim\mathcal{D}}[w^{*}(S)].\end{eqnarray*}

Let $A$ contain $i_{1},i_{2},\ldots,i_{l}$ in the order of inclusion
into $A$ by greedy. Partition $U$ into $B_{j}$'s for $j=1,\dots,l$,
where $B_{j}$ is the set of agents ignored by greedy after $i_{1},\dots,i_{j}$
have been added into $A$. Therefore $w_{i}\leq w_{i_{j}}$ for $i\in B_{j}$.
Consider the set $\{i_{1},\dots,i_{j}\}\cup B_{1}\cup\dots\cup B_{j}$.
At any time step, greedy's solution set is always a maximal feasible
subset of the agents visited so far. Therefore $\{i_{1},\dots,i_{j}\}$
is a base of $\{i_{1},\dots,i_{j}\}\cup B_{1}\cup\dots\cup B_{j}$.
By the definition of $p$-independence, the maximal base of $\{i_{1},\dots,i_{j}\}\cup B_{1}\cup\dots\cup B_{j}$
has size at most $p\cdot j$, and it follows that $\sum_{i\in B_{1}\cup\dots\cup B_{j}}\tilde{q}_{i}\leq p\cdot j$
.

Now our claim $\sum_{i\in A}w_{i}\geq\frac{1}{p}\sum_{i\in U}\tilde{q}_{i}w_{i}$
follows from the following inequalities: (let $w_{i_{l+1}}=0$)

\begin{eqnarray*}
\sum_{i\in U}\tilde{q}_{i}w_{i} & = & \sum_{1\leq j\leq l}\sum_{i\in B_{j}}\tilde{q}_{i}w_{i}\\
 & \leq & \sum_{1\leq j\leq l}\sum_{i\in B_{j}}\tilde{q}_{i}w_{i_{j}}\\
 & = & \sum_{1\leq j\leq l}\sum_{i\in B_{1}\cup\dots\cup B_{j}}\tilde{q}_{i}(w_{i_{j}}-w_{i_{j+1}})\\
 & \leq & \sum_{1\leq j\leq l}p\cdot j\cdot(w_{i_{j}}-w_{i_{j+1}})\\
 & = & p\cdot\sum_{1\leq j\leq l}w_{i_{j}}=p\cdot\sum_{i\in A}w_{i}.\end{eqnarray*}

\end{proof}

\subsection{Proof of Proposition \ref{pro:p_ind_lb}}

\begin{proof}

To define the set system $(N,\mathcal{I})$, let $Y$ be the set of
all strings $a_{1}a_{2}\dots a_{n}$ of length $n$ over the alphabet
$\{1,\dots,n\}$. For every $i\in\{1,2,\dots,n\}$ and $b\in\{1,\dots,n\}$,
we denote by $[a_{i}=b]$ the {}``miniset'' that contains all strings
from $Y$ with the $i$-th letter $a_{i}$ being $b$. Then $N$ is
the set of all such minisets. To define the feasible subsets $\mathcal{I}$,
a subset $S$ of minsets from $N$ is feasible if and only if no two
minisets in $S$ intersect. Note that two different minisets $[a_{i}=b]$
and $[a_{i'}=b']$ intersect if and only if $i\neq i'$. It is easy
to verify that this set system is $n$-independent. Finally, we assign
unit weights to every miniset.

We choose a random subset $S$ of $N$ in two ways. In the dependent
case, an index $i$ from $\{1,\dots,n\}$ is chosen at random, and
$S$ contains the miniset $[a_{i}=b]$ for all $b\in\{1,\dots,n\}$.
Clearly all such $S$'s are feasible, and the rank function has expected
value $n$.

In the independent case, for all $i,b$, we include every miniset
$[a_{i}=b]$ in $S$ with probability $1/n$ independently. For all
$i$, let $X_{i}$ be the number of minisets in $S$ that have the
form $[a_{i}=b]$ for some $b$. Then the rank function is equal to
$\max_{i}X_{i}$. To give a rough estimate of $E[\max_{i}X_{i}]$,
note that for all $i$, \begin{eqnarray*}
 &  & Pr[X_{i}\geq\frac{1}{2}\log n]\\
 & = & \sum_{k=\frac{1}{2}\log n}^{n}{n \choose k}\left(\frac{1}{n}\right)^{k}\left(1-\frac{1}{n}\right)^{n-k}\\
 & \leq & \sum_{k=\frac{1}{2}\log n}^{n}\left(\frac{n\cdot e}{k}\right)^{k}\frac{1}{n^{k}}\leq n\cdot\left(\frac{e}{\frac{1}{2}\log n}\right)^{\frac{1}{2}\log n}\\
 & = & \frac{n}{2^{\Omega(\log n\cdot\log\log n)}}.\end{eqnarray*}
 Therefore for sufficiently large $n$, $Pr[\max_{i}X_{i}\geq\frac{1}{2}\log n]\leq1-\left(1-\frac{n}{2^{\Omega(\log n\cdot\log\log n)}}\right)^{n}\leq\frac{1}{n}$,
and hence $E[\max_{i}X_{i}]\leq Pr[\max_{i}X_{i}\geq\frac{1}{2}\log n]\cdot n+\frac{1}{2}\log n\leq\log n$.
It follows that the correlation gap is at least $n/\log n$ for sufficiently
large $n$. 

\end{proof}
\end{document}